\newcommand{\Bool}{\{0,1\}}
\theoremstyle{plain}
\newtheorem{theorem}{Theorem}
\newtheorem{lemma}{Lemma}
\newtheorem{definition}{Definition}
\theoremstyle{definition}
\begin{document}

\title{Second order conservative languages with a Maltsev polymorphism also have a majority polymorphism.}
\author{Vodolazskiy E.V.}
\date{}

\maketitle
\begin{abstract} The paper proves that for any second order conservative constraint language with a Maltsev polymorphism there is a majority polymorphism.
Moreover, the majority polymorphism can be defined by the Maltsev polymorphism.
\end{abstract}

\section{Introduction}

Polymorphisms have been known for a while as a powerful tool to describe tractable languages in constraint satisfaction problems (CSP) \cite{bulatov}.
It is known that a constraint language with a majority polymorphism results in a CSP that is solvable in polynomial time \cite{majority}. 
The same is true for the Maltsev polymorphism \cite{maltsev}. 
Even though both these classes are tractable, the algorithm for solving majority CSPs is much simpler and more efficient than the one that solves Maltsev CSPs.
We prove that any second order constraint language (the ones that only has constraints on one or two variables) with a Maltsev polymorphism also has a majority polymorphism.
Which makes it possible to use simpler algorithms designed for majority CSPs.

This result is closely related to the fact that Maltsev digraphs also have a majority polymorphism \cite{malDi} 
with the difference that we define an easy way to obtain a majority polymorphism given a Maltsev polymorphism. 
Our proof is also simpler and is presented in terms of constraint languages.

\section{Main definitions}

\begin{definition}
    For any finite set $X$ a second order language on this set is a subset of binary and unary functions on $X$.\\
	In other words, $\Gamma \subseteq \Bool^X \bigcup \Bool^{X{\times}X}$ is a second order language.
\end{definition}

\begin{definition}\label{conservativeLanguage}
	A language $\Gamma$ is called conservative if it contains all possible unary functions $f:X \rightarrow \Bool$:
	$$\forall f:X \rightarrow \Bool \Rightarrow f \in \Gamma.$$
\end{definition}

\begin{definition}\label{conservativeOperator}
	Operator $p:X \times X \times X \rightarrow X$ is called conservative if it's value is always equal to one of the arguments:
	$$\forall x, y, z \in X \Rightarrow p(x, y, z) \in \{x, y, z\}.$$
\end{definition}

\begin{definition}
	Operator $p:X \times X \times X \rightarrow X$ is called Maltsev if
	$$\forall x, y \in X \Rightarrow p(x, x, y) = p(y, x, x) = y.$$
\end{definition}

\begin{definition}
	Operator $p:X \times X \times X \rightarrow X$ is called a majority operator if
	$$\forall x, y \in X \Rightarrow p(x, x, y) = p(x, y, x) = p(y, x, x) = x.$$
\end{definition}

\begin{definition}
	Operator $p:X \times X \times X \rightarrow X$ is called a polymorphism of a $k$-ary function $X^k \rightarrow \Bool$ if
	$$\forall \overline{x}, \overline{y}, \overline{z} \in X^k: f(\overline{x}){=}f(\overline{y}){=}f(\overline{z}){=}1 \Rightarrow f(\overline{t}) = 1, \text{where } t_i = p(x_i, y_i, z_i), i=\overline{1,k}$$
\end{definition}

\begin{definition}
	Operator $p$ is called a polymorphism of a language $\Gamma$ if $p$ is a polymorphism of all functions $f \in \Gamma$ of the language.
\end{definition}

\section{The result}

\begin{lemma}\label{conservativePlymorphismLemma}
	Let $p:X \times X \times X \rightarrow X$ be a polymorphism of some conservative language~$\Gamma$. Then $p$ is a conservative operator.
\end{lemma}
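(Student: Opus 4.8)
The plan is to fix arbitrary $x, y, z \in X$ and show directly that $p(x,y,z) \in \{x,y,z\}$, by exhibiting a single unary constraint that forces the output back into the argument set. First I would consider the indicator function of the three-element set $\{x,y,z\}$: define $f : X \to \Bool$ by $f(a) = 1$ if $a \in \{x,y,z\}$ and $f(a) = 0$ otherwise. Since $\Gamma$ is conservative (Definition~\ref{conservativeLanguage}), it contains every unary Boolean function on $X$, so in particular $f \in \Gamma$.

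Next I would invoke the polymorphism hypothesis in the unary case $k=1$. Because $p$ is a polymorphism of $\Gamma$, it is a polymorphism of $f$; applying the definition of polymorphism to the single-coordinate arguments $x, y, z$, we have $f(x) = f(y) = f(z) = 1$ (as each of $x$, $y$, $z$ lies in $\{x,y,z\}$), so the polymorphism condition yields $f\bigl(p(x,y,z)\bigr) = 1$. By the definition of $f$ this says precisely that $p(x,y,z) \in \{x,y,z\}$. Since $x, y, z$ were arbitrary, this is exactly the statement that $p$ is conservative (Definition~\ref{conservativeOperator}).

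The only real content is choosing the right test function: the indicator of $\{x,y,z\}$ packages exactly the constraint needed so that the $k=1$ polymorphism property pins the output back into the set, and no contradiction, case analysis, or use of the Maltsev or majority structure is required. I do not anticipate any genuine obstacle here; the argument is a one-step application of conservativity of $\Gamma$ together with the polymorphism definition.
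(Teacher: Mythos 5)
Your proposal is correct and uses exactly the same key idea as the paper: testing $p$ against the unary indicator function of $\{x,y,z\}$, which lies in $\Gamma$ by conservativity. The only difference is that you argue directly while the paper phrases it as a proof by contradiction, which is not a substantive distinction.
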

\begin{proof}
    Assume the opposite, let $p$ be not conservative. Then, according to Definition \ref{conservativeOperator}
    there exists such a triple $x, y, z \in X$ that $p(x, y, z) \notin \{x, y, z\}$. 
    Consider such a function $f:X \rightarrow \Bool$ that
	$$f(t) = \begin{cases} 1, & \text{for } t=x,\\1, & \text{for } t=y,\\1, & \text{for } t=z,\\0, & \text{for } t \neq x, y, z.\\  \end{cases}$$
    Operator $p$ is not a polymorphism of $f$, because $f(x){=}f(y){=}f(z){=}1$ and $f(p(x, y, z)){=}0$, since $p(x, y, z) \neq x, y, z$.
    However, $f \in \Gamma$, as a function of one variable according to Definition \ref{conservativeLanguage}. 
	Therefore, $p$ is not a polymorphism of $\Gamma$, which contradicts our initial assumption.
\end{proof}

\begin{lemma}\label{conservativeIsPolymorphismLemma}
	Let $p:X \times X \times X \rightarrow X$ be a conservative operator and $f:X \rightarrow \Bool$ be a unary function. 
	Then $p$ is a polymorphism of $f$.
\end{lemma}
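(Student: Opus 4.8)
The plan is to apply the definition of polymorphism directly in the case $k=1$, where the tuples $\overline{x},\overline{y},\overline{z}\in X^1$ are simply elements of $X$. For a unary $f$ the polymorphism condition reduces to the statement that for all $x,y,z\in X$ with $f(x)=f(y)=f(z)=1$ we have $f(p(x,y,z))=1$. So I would fix an arbitrary such triple $x,y,z$ satisfying the hypothesis and show the required equality.

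The key step is to invoke conservativity. By Definition \ref{conservativeOperator}, since $p$ is conservative we have $p(x,y,z)\in\{x,y,z\}$. Hence the value $f(p(x,y,z))$ equals one of $f(x)$, $f(y)$, $f(z)$, each of which is $1$ by assumption. A trivial case split on which argument $p(x,y,z)$ coincides with (or simply the observation that $f$ takes the same value on equal inputs) then yields $f(p(x,y,z))=1$.

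Since this holds for every triple meeting the hypothesis, $p$ is by definition a polymorphism of $f$, completing the argument. I do not expect any genuine obstacle here: the only point worth stating carefully is the reduction of the general $k$-ary polymorphism definition to the unary case, after which conservativity does all the work. In effect this lemma is the converse companion to Lemma \ref{conservativePlymorphismLemma}, and the proof is correspondingly short.
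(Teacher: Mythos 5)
Your proposal is correct and follows exactly the same route as the paper's own one-line proof: fix a triple with $f(x)=f(y)=f(z)=1$ and use conservativity to conclude $p(x,y,z)\in\{x,y,z\}$, hence $f(p(x,y,z))=1$. No differences worth noting.
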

\begin{proof}
    Indeed, for any triple $x, y, z \in X$ that fulfills $f(x) = f(y) = f(z) = 1$ the relation $f(p(x, y, z)) = 1$ is also true, since $p(x, y, z) \in \{x, y, z\}$.
\end{proof}

\begin{definition}
	For each operator $p:X \times X \times X \rightarrow X$ define an operator $p':X \times X \times X \rightarrow X$ that we will call a derivative operator of $p$:
	$$p'(x, y, z) = \begin{cases}z, & \text{if } p(x, y, z) = x,\\ x, & \text{if } p(x, y, z) \neq x.\end{cases}$$
\end{definition}

\begin{lemma}\label{derivativeIsMajorityLemma}
	Let $p:X \times X \times X \rightarrow X$ be a Maltsev operator. Then its derivative operator $p'$ is a majority operator.
\end{lemma}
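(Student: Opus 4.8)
The plan is to verify directly the three defining identities of a majority operator, namely that $p'(x, x, y) = p'(x, y, x) = p'(y, x, x) = x$ holds for every $x, y \in X$. Each of the three is obtained by unfolding the definition of the derivative operator $p'$ and invoking the Maltsev identities $p(x, x, y) = p(y, x, x) = y$ where needed. I would treat the three positions of the repeated argument separately, since the definition of $p'$ is asymmetric in its first and third arguments.

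First I would dispose of the middle case $p'(x, y, x)$, which requires no Maltsev property at all. By definition $p'$ always returns either its first argument or its third argument; here both of these equal $x$, so whatever the value of $p(x, y, x)$ may be, we get $p'(x, y, x) = x$. Next, for $p'(y, x, x)$ I would use the Maltsev identity $p(y, x, x) = y$: since this value coincides with the first argument $y$, the first branch of the definition of $p'$ applies and returns the third argument, which is $x$, giving $p'(y, x, x) = x$.

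The remaining case $p'(x, x, y)$ is where a small case split is needed, and I expect it to be the only mildly delicate point. Here Maltsev gives $p(x, x, y) = y$. If $y \neq x$, then $p(x, x, y) = y$ is different from the first argument $x$, so the second branch of the definition applies and returns the first argument $x$. If instead $y = x$, all three arguments are equal and the result is trivially $x$. In either subcase $p'(x, x, y) = x$, completing the verification of all three majority identities. The main thing to watch is keeping straight which argument plays the role of ``first'' and ``third'' in the definition of $p'$; once the definitions are unwound correctly, the argument is a short and routine check.
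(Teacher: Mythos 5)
Your proof is correct and follows essentially the same approach as the paper: a direct verification of the three majority identities by unfolding the definition of $p'$ and applying the Maltsev identities. The only difference is that you make explicit the $y = x$ subcase in verifying $p'(x,x,y)=x$, which the paper glosses over; this is a welcome bit of extra care but not a substantive departure.
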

\begin{proof}
    Indeed, for any $x, y \in X$ the following is valid:
	\begin{itemize}
		\item $p'(x, x, y) = x$, since $p(x, x, y) = y$,
		\item $p'(x, y, x) = x$, since the derivative's operator value is always either the first or the last argument,
		\item $p'(x, y, y) = y$, since $p(x, y, y) = x$.
	\end{itemize}
\end{proof}

\begin{lemma}\label{derivativeIsPolymorphismLemma}
	Let $p:X \times X \times X \rightarrow X$ be a conservative Maltsev polymorphism of some binary function $f:X \times X \rightarrow \Bool$.
    Then its derivative operator $p'$ is also a polymorphism of $f$.
\end{lemma}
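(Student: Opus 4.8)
The plan is to work with the binary relation $R = \{(u,v) \in X \times X : f(u,v)=1\}$ and to prove directly that $p'$ preserves $R$. Fix three tuples $\bar a = (a_1,a_2)$, $\bar b = (b_1,b_2)$, $\bar c = (c_1,c_2) \in R$ and set $\bar s = \bigl(p'(a_1,b_1,c_1),\, p'(a_2,b_2,c_2)\bigr)$; the goal is $\bar s \in R$. The first observation is purely definitional: each coordinate $p'(a_i,b_i,c_i)$ equals either $a_i$ or $c_i$, so $\bar s$ is one of the four corners of $\{a_1,c_1\} \times \{a_2,c_2\}$. Two of these corners, namely $\bar a$ and $\bar c$, already lie in $R$, so the cases in which both coordinates return the $a$-value, or both return the $c$-value, are immediate.

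The remaining work is the two mixed corners, and by the symmetry between the two coordinate slots it suffices to treat one of them: the case $p(a_1,b_1,c_1)=a_1$ (so $p'$ returns $c_1$) together with $p(a_2,b_2,c_2) \neq a_2$ (so $p'$ returns $a_2$), whose target is $(c_1,a_2) \in R$. Here I would first feed the three given tuples into $p$: since $p$ is a polymorphism of $f$, the tuple $\bar t := \bigl(p(a_1,b_1,c_1),\,p(a_2,b_2,c_2)\bigr) = (a_1,t_2)$ lies in $R$, where $t_2 := p(a_2,b_2,c_2) \neq a_2$. Since $p$ is conservative, $t_2 \in \{b_2,c_2\}$.

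The heart of the argument is to produce the auxiliary tuple $(c_1,t_2)$ in $R$ and then finish by a rectangularity step. If $t_2 = c_2$, then $(c_1,t_2)=\bar c \in R$ for free; if $t_2 = b_2$, then applying $p$ to $\bar a,\bar t,\bar c$ yields $(c_1,t_2)$, since the first coordinate is $p(a_1,a_1,c_1)=c_1$ and the second is $p(a_2,b_2,c_2)=b_2=t_2$. Finally, applying $p$ to the three tuples $(c_1,t_2)$, $\bar t=(a_1,t_2)$ and $\bar a=(a_1,a_2)$ gives the target: the first coordinate is $p(c_1,a_1,a_1)=c_1$ and the second is $p(t_2,t_2,a_2)=a_2$, both by the Maltsev identities $p(u,v,v)=u$ and $p(u,u,w)=w$. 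Every tuple used lies in $R$, so $(c_1,a_2) \in R$.

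I expect the main obstacle to be precisely this mixed case. A single coordinatewise application of $p$ to $\bar a,\bar b,\bar c$ can steer only one coordinate to the value we want, and a one-shot attempt at $(c_1,a_2)$ breaks down when $t_2=b_2$. The fix is the two-stage construction above — build $(c_1,t_2)$, then transport along the Maltsev rectangularity $p\bigl((c_1,t_2),(a_1,t_2),(a_1,a_2)\bigr)=(c_1,a_2)$ — which is exactly the point where the operator being Maltsev, and not merely conservative, is used. The only delicate point is verifying that the first stage goes through whether $t_2$ coincides with $b_2$ or with $c_2$, which is the single small case split in the proof; the other mixed corner $(a_1,c_2)$ follows by interchanging the two coordinates.
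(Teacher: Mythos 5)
Your proof is correct, and it reaches the conclusion by a genuinely leaner decomposition than the paper's. The paper splits into the nine possible values of $\bigl(p(x_1,y_1,z_1),p(x_2,y_2,z_2)\bigr)$ allowed by conservativity and, for each, exhibits one or two further coordinatewise applications of $p$ that land on the $p'$-image; five of the nine are immediate and four need explicit derivations. You instead organize the cases by the output of $p'$ itself, observing that it is always a corner of $\{a_1,c_1\}\times\{a_2,c_2\}$: two corners are the given tuples $\bar a$ and $\bar c$, and the two mixed corners are exchanged by swapping the coordinate slots, so only one derivation (with the small sub-split $t_2=b_2$ versus $t_2=c_2$) remains. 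The engine is the same in both proofs --- feed tuples already known to satisfy $f=1$ back into $p$ and steer coordinates with the Maltsev identities $p(u,u,w)=w$ and $p(u,v,v)=u$ --- and even the two-stage structure of your mixed case (first manufacture $(c_1,t_2)$, then rectangulate to $(c_1,a_2)$) mirrors the paper's two-line derivations for its cases $x_1y_2$ and $y_1x_2$, though via a different intermediate tuple ($(c_1,t_2)$ rather than the paper's $(x_1,z_2)$). What your version buys is economy and a clearer view of exactly where Maltsevness, as opposed to mere conservativity, is used; what the paper's table buys is that each of the nine cases can be checked completely mechanically without invoking the coordinate-swap symmetry.
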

\newcommand{\byProposition}{by assumption}
\begin{proof}
    Assume that the function $f$ returns $1$ on some triple of arguments:
	$$f(x_1, x_2) = f(y_1, y_2) = f(z_1, z_2) = 1.$$
    We will show that by applying a derivative operator $p'$ to this triple we obtain a value on which the function $f$ is also equal to $1$.
	In other words, we will show that $f(p'(x_1, y_1, z_1), p'(x_2, y_2, z_2)) = 1$.
	
	Since operator $p$ is conservative, 
    the result of applying it to the triple of pairs of arguments can only be one of the following nine pairs: $x_1x_2$, $x_1y_2$, $x_1z_2$, $y_1x_2$, $y_1y_2$, $y_1z_2$, $z_1x_2$, $z_1y_2$, $z_1z_2$.
    For each of these nine cases we will show that the derivative operator $p'$ also gives a pair on which the function $f$ equals to $1$.\\
    The first column of the following table is the value obtained by applying $p$, the second column is the result of $p'$ and the third column is the proof given for this case.\\
	\begin{tabular}{l || l}
	\begin{tabular}{ c | c | l }
		$p$ & $p'$ & \\
		\hline
		$x_1x_2$ & $z_1z_2$ & \byProposition \\
		& & \\
		$x_1y_2$ & $z_1x_2$ & $\begin{array}{c}x_1y_2\\y_1y_2\\z_1z_2\\ \hline x_1z_2\end{array} \Rightarrow 
						       \begin{array}{c}x_1x_2\\x_1z_2\\z_1z_2\\ \hline z_1x_2\end{array}$ \\
		& & \\
		$x_1z_2$ & $z_1x_2$ & $\begin{array}{c}x_1x_2\\x_1z_2\\z_1z_2\\ \hline z_1x_2\end{array}$ \\
		& & \\
		$y_1x_2$ & $x_1z_2$ & $\begin{array}{c}x_1x_2\\y_1x_2\\z_1z_2\\ \hline y_1z_2\end{array} \Rightarrow 
						       \begin{array}{c}x_1x_2\\y_1x_2\\y_1z_2\\ \hline x_1z_2\end{array}$ \\
    \end{tabular} &
    \begin{tabular}{ c | c | l }
		$p$ & $p'$ & \\
		\hline
		$y_1y_2$ & $x_1x_2$ & \byProposition \\
		& & \\
		$y_1z_2$ & $x_1x_2$ & \byProposition \\
		& & \\
		$z_1x_2$ & $x_1z_2$ & $\begin{array}{c}x_1x_2\\z_1x_2\\z_1z_2\\ \hline x_1z_2\end{array}$ \\
		& & \\
		$z_1y_2$ & $x_1x_2$ & \byProposition \\
		& & \\
		$z_1z_2$ & $x_1x_2$ & \byProposition\\
		& & \\
		& & \\
		& & \\
		& & \\
		
	\end{tabular}
	\end{tabular}
	\\
	\\
	\\
	\\
	Notation $\begin{array}{c}x_1x_2\\y_1y_2\\z_1z_2\\ \hline t_1t_2\end{array}$ should be read as follows
	$$\left.\begin{array}{l}
		f(x_1, x_2) = f(y_1, y_2) = f(z_1, z_2) = 1, \\ 
		p(x_1, y_1, z_1) = t_1, \\ 
		p(x_2, y_2, z_2) = t_2 
	  \end{array}\right\}\Rightarrow f(t_1, t_2) = 1.$$
\end{proof}

\begin{theorem}
    Let some second order conservative language $\Gamma$ have a Maltsev polymorphism $p:X \times X \times X \rightarrow X$. 
    Then $\Gamma$ has a majority polymorphism $p'$, which is the derivative operator of $p$.
\end{theorem}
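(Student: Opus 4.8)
The plan is to assemble the preceding lemmas, since each of them discharges a self-contained piece of the work. First I would establish that $p$ itself is conservative: by hypothesis $p$ is a polymorphism of the conservative language $\Gamma$, so Lemma~\ref{conservativePlymorphismLemma} applies directly and yields that $p$ is a conservative operator. This is the key structural fact that unlocks the remaining lemmas, since both Lemma~\ref{derivativeIsPolymorphismLemma} and the very definition of $p'$ implicitly rely on $p$ returning one of its arguments.

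Next I would check that $p'$ is a majority operator. Because $p$ is Maltsev by hypothesis, Lemma~\ref{derivativeIsMajorityLemma} gives immediately that the derivative $p'$ satisfies the majority identities. After this step it only remains to verify that $p'$ is a polymorphism of every function in $\Gamma$.

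Since $\Gamma$ is a second order language, every $f \in \Gamma$ is either unary or binary, so I would split into these two cases. For the binary functions I would invoke Lemma~\ref{derivativeIsPolymorphismLemma}: $p$ is a conservative (by the first step) Maltsev polymorphism of each binary $f \in \Gamma$, hence $p'$ is a polymorphism of $f$ as well. For the unary functions I would first note that $p'$ is itself conservative—this is immediate from its definition, since $p'(x,y,z)$ always equals either $x$ or $z$, both of which are among the arguments—and then apply Lemma~\ref{conservativeIsPolymorphismLemma}, which states that any conservative operator is a polymorphism of any unary function.

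Combining the two cases, $p'$ is a polymorphism of every $f \in \Gamma$, so by the definition of a polymorphism of a language $p'$ is a polymorphism of $\Gamma$; together with the second step this makes $p'$ a majority polymorphism of $\Gamma$, as claimed. The genuinely hard part, namely the nine-case verification that $p'$ preserves binary relations, has already been carried out in Lemma~\ref{derivativeIsPolymorphismLemma}, so at the level of the theorem the only point requiring real (if trivial) care is confirming the conservativity of $p'$, which is what lets the unary case go through.
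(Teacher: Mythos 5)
Your proposal is correct and follows essentially the same route as the paper: Lemma~\ref{conservativePlymorphismLemma} to get conservativity of $p$, Lemma~\ref{derivativeIsMajorityLemma} for the majority identities, Lemma~\ref{conservativeIsPolymorphismLemma} for the unary case (after observing $p'$ is conservative because it always returns its first or last argument), and Lemma~\ref{derivativeIsPolymorphismLemma} for the binary case. The only difference is the order in which the pieces are assembled, which is immaterial.
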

\begin{proof}
    Indeed, the derivative operator $p'$ is a majority operator according to Lemma \ref{derivativeIsMajorityLemma}.
	The derivative operator $p'$ is conservative and therefore is a polymorphism of all unary functions from $\Gamma$ according to Lemma \ref{conservativeIsPolymorphismLemma}.
	
	Since $\Gamma$ is a conservative language and $p$ is a polymorphism of $\Gamma$, then according to Lemma \ref{conservativePlymorphismLemma}, $p$ is a conservative operator.
	Since $p$ is a conservative Maltsev polymorphism of $\Gamma$, then its derivative operator $p'$ is a polymorphism of all binary functions of $\Gamma$ according to Lemma \ref{derivativeIsPolymorphismLemma}.
	
	Therefore, the derivative operator $p'$ is a majority operator of $\Gamma$.
\end{proof}

\end{document}